\definecolor{light-gray}{gray}{0.95}
\newtcolorbox{mybox}[2][]{%
  attach boxed title to top center
               = {yshift=-8pt},
  colback      = blue!5!white,
  colframe     = blue!75!black,
  fonttitle    = \bfseries,
  colbacktitle = blue!85!black,
  title        = #2,#1,
  enhanced,
}   
\newtheorem{theorem}{Theorem}
\newtheorem{lemma}[theorem]{Lemma}
\DeclareMathOperator{\N}{\mathcal{N}}
\newcommand{\logit}{\textrm{logit}}
\title{Connecting Bayes factor and the Region of Practical Equivalence (ROPE) for testing interval null hypothesis }
\author{J. G. Liao, Vishal Midya, and Arthur Berg\\
 Division of Biostatistics and Bioinformatics\\
 Penn State University College of Medicine}
\date{}
\begin{document}

\maketitle
\begin{abstract} 

There has been strong recent interest in testing interval null  hypothesis  for improved scientific inference. For example, Lakens et al (2018) and Lakens and  Harms (2017) use this approach to study if there is a pre-specified meaningful treatment effect in gerontology and clinical trials, which is different from the more traditional point null hypothesis that tests for any treatment effect. Two popular Bayesian approaches are available for interval null hypothesis testing. One is the standard Bayes factor and the other is the Region of Practical Equivalence (ROPE) procedure championed by Kruschke and others over many years. This paper establishes a formal connection between these two approaches with two benefits. First, it helps to better understand and improve the ROPE procedure. Second, it leads to a simple and effective algorithm for computing Bayes factor in a wide range of problems using draws from posterior distributions generated by standard Bayesian programs such as BUGS, JAGS and Stan. The tedious and error-prone task of coding custom-made software specific for Bayes factor is then  avoided.
\end{abstract}

Keywords: Markov chain Monte Carlo; Bayesian t-test; Meta analysis; Stan
%\tableofcontents

\section{Introduction}

Hypothesis testing is widely used in scientific research. For a statistical model parameterized in parameter $\theta$, the general formulation is: 
\[
H_0: \theta\in\Theta_0\quad\text{vs}\quad H_1:\theta\in\Theta_1.
\]
Traditionally, a point null hypothesis takes $\Theta_0$ to be a single point $\theta_0$ leading to the following 
\[
H_0: \theta=\theta_0\quad\text{vs}\quad H_1:\theta \ne \theta_0.
\]
However, many researchers, such as \cite{Meehl:1978aa} and \cite{Cohen:1994aa}, have argued that such formulation is not appropriate in most scientific researches because the parameter $\theta$ cannot be exactly specified to a single point $\theta_0$. For example when comparing a new treatment with a standard one, the differences in the mean of an outcome variable, say $\theta$, is almost never exactly 0. Therefore a more appropriate formulation is  
\[
\begin{split}
    H_0&:  \|\theta - \theta_0 \| \leq \delta\\
    H_1&:  \|\theta - \theta_0 \| > \delta,
\end{split}
\]
where $\theta$ that satisfies  $\|\theta - \theta_0\| \leq \delta$ represents practically negligible deviation from $\theta_0$. For example, \cite{Lakens:2018aa}  and  \cite{Lakens:2017}  use  this  approach to  study  if  there
is a pre-specified meaningful treatment effect in gerontology and clinical trials. \cite{Morey:2011aa} reviews and develops interval null hypothesis testing in the context of psychological research.

The frequentist approach to interval null hypothesis testing is to conduct an equivalence test \citep[e.g.][]{Lakens:2017,Rogers:1993,Wellek:2010}. The limitation of the frequentist hypothesis testing is well documented \citep{Wasserstein:2016aa}. In particular, it can only quantify evidence against the default hypothesis but not the evidence for it. Recently, there have been renewed interests \citep{Harms:2018ab,Lakens:2018aa,Morey:2011aa,Kruschke:2013aa} in using a Bayesian approach for tackling this interval null hypothesis problem so that $H_0$ and $H_1$ can be treated on a more equal basis. The standard Bayes approach is to use Bayes factor to quantify the relative support of the data for one hypothesis over the other. Standard references for Bayes factor include  \cite{Kass:1995aa,Berger:2013aa,Berger:1996aa,Berger:2001aa,Berger:2015aa}. The  special issue of \emph{Journal of Mathematical Psychology} (June, 2016) provides in-depth and updated discussion of Bayes Factor.

Alternatively, Krushke and others \citep{Kruschke:2018aa,Kruschke:2011aa,Kruschke:2013aa,Kruschke:2014aa, Carlin:2008aa,Edwards:2010aa,Freedman:1984aa,Hobbs:2007aa} have championed a procedure called ROPE (region of practical equivalence), in which the interval null is treated as a region of practical equivalence.  In this procedure, the 1-$\alpha$ highest density interval of the posterior distribution of $\theta$ is constructed. If this interval falls completely in $\Theta_j$, either $j=0$ or $j=1$, then $H_j$ is selected. Otherwise, the selection between $H_0$ and $H_1$ is declared uncertain.

The Bayes factor and ROPE have been treated as two different and distinctive procedures. In this paper, we provide a formal connection between them in Lemma \ref{lemma1} with two benefits.  First, it helps to better understand and improve the the ROPE procedure. Secondly and more importantly, it leads to a simple and effective algorithm for computing Bayes factor in a wide range of problems by using posterior draws generated by standard Bayesian software programs such as WinBUGS \citep{Lunn2000}, JAGS \citep{Plummer03jags:a}, and Stan \citep{STAN}. This circumvents the need for custom-made software to calculate marginal distributions for the Bayes factors. 

%Custom-made software specific for Bayes factor is no longer needed, leading to potentially huge productivity gain.

%Note, however, this method only works when the intersection of  $\Theta_0$ and $\Theta_1$ is an empty set.

\section{Connection between Bayes factor and the posterior distribution}

We first outline the Bayes factor approach in testing $H_0$ vs. $H_1$. We need first to specify a prior of $\theta$ under $H_0$, $\pi_0$ on $\Theta_0$, and a prior of $\theta$ under $H_1$, $\pi_1$ on $\Theta_1$. Let $f(y\mid\theta)$ be the likelihood. The marginal distribution of $y$ under $H_j$ is then, 
\[
f(y\mid \pi_j) = \int {f(y\mid\theta) \pi_j (\theta) d\theta}, \quad \text{for } j=0,1.
\]
Here we use notation $f(y\mid \pi_j)$ to emphasize the dependence of marginal density on on prior $\pi_j$. Further assume that the prior probability is $\eta_1$ for for $H_1$ and $1 - \eta_1$ for $H_0$. It follows directly from Bayes Theorem that 
\begin{equation}
    \label{eq:BF}
\frac{\Pr(\theta \sim \pi_1\mid y)}{ 1- \Pr(\theta \sim \pi_1\mid y)} = \frac{f(y\mid \pi_1)}{f(y\mid \pi_0)} \frac{\eta_1}{1-\eta_1},
\end{equation}
where $\Pr(\theta \sim \pi_1\mid y)$ denotes the posterior probability that $\theta$ was generated from $\pi_1$.

The term $\frac{f(y\mid \pi_1)}{f(y\mid \pi_0)}$ is called the Bayes Factor, which quantifies the relative support in data for $\theta\sim \pi_1$ over $\theta\sim \pi_0$, independent of $\eta_1$. For interval null hypothesis, the Bayes factor and the posterior probability of $\theta$ are closely related as shown in Lemma \ref{lemma1}. 
\begin{lemma}
\label{lemma1}
Assume that $\Theta$ is the disjoint union of $\Theta_0$ and $\Theta_1$ and  $\pi_0(\theta)=0$ for $\theta\in \Theta_1$ and $\pi_1(\theta)=0$ for $\theta\in \Theta_0$. Let $\eta_1=\Pr(H_1)$ as defined before and define the combined prior of $\theta$ on $\Theta$ to be
\begin{equation}
\label{eq:mixture}
\pi= (1 - \eta_1)\pi_0+ \eta_1 \pi_1.
\end{equation}
Let $f(y\mid\theta)$ be the likelihood, and let $f(\theta\mid y)$ be the posterior density of $\theta$ under $\pi$.
Then 
\[
\Pr(\theta \sim \pi_1\mid y) = \int_{\Theta_1} f(\theta\mid y)d\theta
\]
and therefore
\[
\text{Bayes factor}\triangleq\frac{f(y\mid \pi_1)}{f(y\mid \pi_0)}=\frac{1-\eta_1}{\eta_1} \frac{\Pr(\theta \sim \pi_1 \mid y)}{1-\Pr(\theta \sim \pi_1 \mid y)}.
\]
\end{lemma}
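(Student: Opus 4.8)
The plan is to work within the hierarchical reformulation implied by the mixture prior \eqref{eq:mixture}: think of $\theta$ as drawn by first selecting a component label (``from $\pi_0$'' with probability $1-\eta_1$, ``from $\pi_1$'' with probability $\eta_1$) and then drawing $\theta$ from the selected component. The goal is to show that, because of the disjoint-support hypotheses, the posterior probability attached to the label $\pi_1$ is exactly the posterior mass that $f(\theta\mid y)$ places on $\Theta_1$. First I would write the posterior density under the combined prior explicitly as
\[
f(\theta\mid y)=\frac{f(y\mid\theta)\,\pi(\theta)}{\int_\Theta f(y\mid\theta)\,\pi(\theta)\,d\theta},
\]
and evaluate the normalizing constant by substituting \eqref{eq:mixture} and splitting the integral over the disjoint pieces $\Theta_0$ and $\Theta_1$. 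Since $\pi_1$ vanishes on $\Theta_0$ and $\pi_0$ vanishes on $\Theta_1$, only one mixture component survives on each piece, yielding $\int_\Theta f(y\mid\theta)\,\pi(\theta)\,d\theta=(1-\eta_1)f(y\mid\pi_0)+\eta_1 f(y\mid\pi_1)$.

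The key step is then to integrate $f(\theta\mid y)$ over $\Theta_1$. On $\Theta_1$ the hypothesis $\pi_0(\theta)=0$ collapses the combined prior to $\pi(\theta)=\eta_1\pi_1(\theta)$, so the numerator reduces to $\eta_1\int_{\Theta_1}f(y\mid\theta)\pi_1(\theta)\,d\theta=\eta_1 f(y\mid\pi_1)$, giving
\[
\int_{\Theta_1}f(\theta\mid y)\,d\theta=\frac{\eta_1 f(y\mid\pi_1)}{(1-\eta_1)f(y\mid\pi_0)+\eta_1 f(y\mid\pi_1)}.
\]
To close the first identity I would solve the already-established relation \eqref{eq:BF} for $\Pr(\theta\sim\pi_1\mid y)$: writing $p=\Pr(\theta\sim\pi_1\mid y)$, equation \eqref{eq:BF} reads $p/(1-p)=\eta_1 f(y\mid\pi_1)/[(1-\eta_1)f(y\mid\pi_0)]$, and a one-line rearrangement produces exactly the closed form in the display above, establishing $\Pr(\theta\sim\pi_1\mid y)=\int_{\Theta_1}f(\theta\mid y)\,d\theta$.

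The second assertion, the Bayes factor formula, then needs no new work: it follows by isolating $f(y\mid\pi_1)/f(y\mid\pi_0)$ in \eqref{eq:BF} and substituting the first identity for $\Pr(\theta\sim\pi_1\mid y)$. I expect the only genuine subtlety to be conceptual rather than computational, namely the justification that the event ``$\theta$ was generated from $\pi_1$'' coincides, almost surely under $\pi$, with the event $\{\theta\in\Theta_1\}$. This is precisely where the disjoint-support assumptions do the essential work, since they force the component label to be determined by the region in which $\theta$ lands; without them the two quantities would differ. Once that identification is granted, every remaining step is a short algebraic manipulation.
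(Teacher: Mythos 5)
Your proposal is correct and follows essentially the same route as the paper: both compute $\int_{\Theta_1} f(\theta\mid y)\,d\theta$ by substituting the mixture prior, using the disjoint supports to kill the cross terms, and recognizing the resulting ratio $\eta_1 f(y\mid\pi_1)/[(1-\eta_1)f(y\mid\pi_0)+\eta_1 f(y\mid\pi_1)]$ as $\Pr(\theta\sim\pi_1\mid y)$, after which the Bayes factor identity is immediate from Equation \eqref{eq:BF}. The only cosmetic difference is that you reach that identification by rearranging \eqref{eq:BF}, whereas the paper asserts it directly from Bayes' theorem on the component label; your closing remark about why the label event coincides with $\{\theta\in\Theta_1\}$ is exactly the point the disjoint-support assumption is there to secure.
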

\begin{proof}
\[
\begin{split}
\int_{\Theta_1} f(\theta\mid y)\, d\theta 
&= \frac{\int_{\Theta_1} \left \{ (1-\eta_1)\pi_0(\theta)+\eta_1\pi_1(\theta)  \right \} f(y\mid\theta)\,d\theta}{\int_{\Theta} \pi (\theta) f(y\mid\theta)\, d\theta}\\
&=\frac{\eta_1\int_{\Theta_1}\pi_1(\theta)f(y\mid\theta)\,d\theta}{\eta_1\int_{\Theta_1}\pi_1(\theta)f(y\mid\theta)\,d\theta + (1-\eta_1)\int_{\Theta_0}\pi_0(\theta)f(y\mid\theta)\,d\theta}\\
&=\Pr(\theta \sim \pi_1\mid y),
\end{split}
\]
which is the first expression of the Lemma. The second expression on Bayes factor follows from this result and \eqref{eq:BF}. 
\end{proof}
Although Lemma \ref{lemma1} is straightforward, it does not appear to have been published before. The two important implications of this Lemma are discussed in the next two sections.

\section{The ROPE Procedure}
In ROPE procedure, a prior distribution $\pi$ is directly specified for parameter $\theta$ on $\Theta$. The posterior distribution of $\theta$ is then computed. \cite{Kruschke:2018aa} summarized the ROPE procedure as follows. 

\begin{quote}
``Consider a ROPE around a null value of a parameter. If the $95\%$ HDI [\underline{H}ighest \underline{D}ensity \underline{I}nterval] of the parameter distribution falls completely outside the ROPE then reject the null value, because
the $95\%$ most credible values of the parameter are all not practically equivalent to the null value. If the $95\%$ HDI of the parameter distribution falls completely inside the ROPE then accept the null value for practical purposes, because the $95\%$ most credible values of the
parameter are all practically equivalent to the null value. Otherwise remain undecided, because some of the most credible values are practically equivalent to the null while other of the most credible values are not.''    
\end{quote}

 % (relationship to frequentist equivelence testing "The concept of ROPE is essential to frequentist equivalence testing (e.g., Lakens, 2017). In equivalence testing, the analyst speciﬁes a ROPE around the null value and decides that the estimated parameter is statistically equivalent to the null value if the conﬁdence interval falls entirely within the ROPE (e.g., Westlake, 1976, 1981). This decision rule follows naturally from the meanings of the conﬁdence interval and ROPE: The conﬁdence interval is the range of parameter values that are not rejected (e.g., Cox, 2006), and when all the unrejected values fall within the ROPE then they are all practically equivalent to the null value".)
 
The highest density interval (HDI) is used in ROPE because it is the $(1-\alpha)$ credible interval of shortest length/smallest volume.  This approach has a strong intuitive appeal and is similar to the frequentist equivalence testing (e.g., \cite{Lakens:2017}; \cite{Rogers:1993}; \cite{Wellek:2010}) in which a $(1-\alpha)$ confidence interval is used in place of the $(1-\alpha)$ highest density interval. 

%Here we provide some additional analysis. 
We now provide additional details of the ROPE procedure. First, ROPE declares undecided between $H_0$ and $H_1$ when the $(1-\alpha)$ HDI is contained in neither $\Theta_0$ nor $\Theta_1$. This, however, is not fully informative as it fails to distinguish between two different cases: (1) the HDI is almost contained in $\Theta_0$ (or alternatively in $\Theta_1$); (2) the HDI is spread more evenly between $\Theta_0$ and $\Theta_1$. To improve on this, note that $\Pr(\theta \sim \pi_j \mid y) \geq (1-\alpha)$ if any $(1-\alpha)$ credible interval is contained in $\Theta_j$ for $j=0$ or $j=1$. Therefore, we propose to directly report $\Pr(\theta \sim \pi_j \mid y)$ with a decision rule to declare for $H_j$ when $\Pr(\theta \sim \pi_j \mid y) > 1-\alpha$. For the two cases above, $\Pr(\theta \sim  \pi_0 \mid y)$ is close to $1-\alpha$ for Case 1 but it is close to 0.5 for Case 2, which is more informative. In addition, $\Pr(\theta \sim \pi_j \mid y)$ connects ROPE directly to Bayes factor as seen in Lemma 1. Reporting $\Pr(\theta \in \Theta_j\mid y)=\Pr(\theta \sim \pi_j \mid y)$ was also proposed in  \cite{Wellek:2010} but  \cite{Kruschke:2018aa} seems to reject this idea in favor of using HDI.  Our analysis provides additional support for using $\Pr(\theta \sim \pi_j \mid y)$ instead of HDI. 

%Computationally, estimating $\Pr(\theta \in \Theta_j \mid y)$ using draws from the posterior distribution of $\theta$ is very easy while calculating HDI is much more challenging.
 
 %instead of looking at if the $(1-\alpha)$ HDI is contained in either $\Theta_0$ or $\Theta_1$, Apparently, there is more evidence to favor one hypothesis over the other in Case 1 than in Case 2.
 
Second, Kruschke and others have touted the greater robustness of ROPE to prior $\pi$. Here we reformulate the ROPE procedure in a mathematically equivalent way so that it more closely resembles a Bayes factor formulation.  ROPE procedure starts with a single prior distribution, $\pi$, on $\Theta$, but by truncating this distribution on $\Theta_0$ and $\Theta_1$, respectively, we arrive at priors $\pi_0$ and $\pi_1$ along with $\eta_1$, the prior probability of $\Theta_0$. More specifically, we have 
\[
\begin{split}
& \pi_0(\theta) = \frac{\pi(\theta)}{\int_{\Theta_0} \pi(\theta) d\theta} \quad \text{for} \quad \theta \in \Theta_0\\    
& \pi_1(\theta) = \frac{\pi(\theta)}{\int_{\Theta_1} \pi(\theta) d\theta} \quad \text{for} \quad \theta \in \Theta_1\\    
& \eta_1 = \int_{\Theta_1} \pi(\theta) d\theta. \\
\end{split}
\]
This allows us to explore the greater robustness of the ROPE procedure over the Bayes factor approach. As an example, consider the hypotheses $H_0: |\theta|\le \delta$ vs $H_1: |\theta|>\delta$, and let $\pi \triangleq N(0,\tau^2)$.  As $\tau \to \infty$, $\pi_0$ approaches $\text{unif}(-\delta,\delta)$ and $\pi_1$ becomes more and more diffused on $(-\infty,-\delta)\cup(\delta,\infty)$. It is well known from Lindley's paradox that a very diffuse $\pi_1$ leads to increased support for $H_0$ over $H_1$ \citep{Robert:2014aa}. However at the same time, $\eta_1 \to 1$, which compensates for the more diffused $\pi_1$ as seen from Equation \ref{eq:BF}. This accounts for the apparent greater robustness of the ROPE procedure. Although the greater robustness is generally desirable, this implicit tangling of $\pi_0$, $\pi_1$ and $\eta_1$ can be difficult to comprehend for many statisticians. How to specify these quantities in a more meaningful way remains a topic for further research.

%%There are two advantages of this approach. First $P(\theta \in \Theta_j \mid y)$ is a continuous measurement from 0 to 1. We can choose $H_j$ when $P(\theta \in \Theta_j \mid y)>1-\alpha$ for a binary decision. However, even when $P(\theta \in \Theta_j \mid y)$ is not greater than $1-\alpha$ for both $j=0$ and $j=1$, its value can be close to $1-\alpha$, which is a lot more informative than just declaring uncertain.   Second,

\section{Computing Bayes factors using MCMC draws from the posterior distribution}
Bayes factor usually requires custom software to compute the marginal distributions $f(y\mid \pi_1)$ and ${f(y\mid \pi_0)}$ through numerical integration. Coding such software can be a tedious and error-prone task. Lemma \ref{lemma1}, however, provides a simple way to compute Bayes factor using standard Bayesian programs for drawing from posterior distribution such as Stan \citep{STAN}, JAGS \citep{Plummer03jags:a}, and WinBUGS \citep{Lunn2000}. Let $\theta^{(1)},\ldots,\theta^{(N)}$ be draws from posterior distribution $p(\theta\mid y)$ under prior $\pi$ in Equation \eqref{eq:mixture}, where $N$ is the Monte Carlo sample size. Then under general conditions,
\[
\frac{1}{N}\sum_{i=1}^N 1_{\theta^{(i)}\in \Theta_1}\stackrel{N\rightarrow\infty}{\longrightarrow}\Pr(\theta \sim \pi_1\mid y)
\]
and
\begin{equation}
\label{eq:BFconvergence}
\frac{1-\eta_1}{\eta_1}\frac{\frac{1}{N}\sum_{i=1}^N 1_{\theta^{(i)}\in \Theta_1}}{1-\frac{1}{N}\sum_{i=1}^N 1_{\theta^{(i)}\in \Theta_1}}\stackrel{N\rightarrow\infty}{\longrightarrow}\text{BF}=\frac{f(y\mid \pi_1)}{f(y\mid \pi_0)}.
\end{equation}

There are several advantages of this approach of computing Bayes factor compared to custom software. First, it reduces programming cost considerably. Second,  packages such as Stan \citep{STAN}, JAGS \citep{Plummer03jags:a}, and WinBUGS \citep{Lunn2000} are well-tested, high quality software routines that are familiar to most statisticians performing Bayesian analyses. So it is more likely to get correct answers quickly. Third, people have gained considerable experience and understanding of the posterior distribution, and this approach facilitates the adoption of Bayes factor as an integrated part of Bayesian inference. The disadvantage is that it can be computationally less efficient than custom software specifically designed for Bayes factor.

Bayes factor does not depend on the value of $\eta_1$. In fact, \eqref{eq:BFconvergence} holds for for any $0<\eta_1<1$. However, based on our experience, the computational efficiency of drawing Monte Carlo samples from the posterior distribution can be greatly improved when $\eta_1$ is chosen so that the combined prior $\pi$ in Equation \eqref{eq:mixture} is made more continuous at the boundary between $\Theta_0$ and $\Theta_1$.

Note that Using MCMC draws to compute Bayes factor has been discussed previously. For example, \cite{Morey:2011ab} discusses using improved Savage-Dickey method to compute Bayes for nested models. \cite{Weinberg:2012aa} in a more general setting, proposes the use of harmonic mean approximation and other methods for computing the marginal distributions. Our method, by utilizing the special structure of interval hypotheses, is much simpler.  

We demonstrate our methods in the following two two examples.

\subsection{Example 1: Two sample t-test}
Here we consider a simple dataset from \cite{Lyle:1987aa}, which has also been analyzed in \cite{Gonen:2005aa} and \cite{Wang:2016aa}.  The data (reproduced in Table \ref{tab:calcium}) consists of changes in systolic blood pressure over 12 weeks for 21 African-American men, 10 of whom took calcium supplements and the remaining 11 took placebo supplements. Testing for equality of means between these two groups with a two-sample t-test (assuming equality of variances) yields a t-statistic of 1.63 and a p-value of 0.12. 
\begin{table}[h]
    \centering
\caption{Blood pressure change data from \cite{Lyle:1987aa}.}
\begin{tabular}{|c||ccccccccccc|}
\hline
     calcium & 7&-4&18&17&-3&-5&1&10&11&-2& \\
\hline     
     placebo & -1&12&-1&-3&3&-5&5&2&-11&-1&-3\\
\hline     
\end{tabular}
    \label{tab:calcium}
\end{table}

 We now proceed to analyze the data using the interval null hypothesis Bayes factor. Let $X_1,\ldots,X_{n}\stackrel{\text{iid}}{\sim}\N(\mu_X,\sigma^2)$, and let $Y_1,\ldots,Y_{m}\stackrel{\text{iid}}{\sim}\N(\mu_Y,\sigma^2)$.  We are interested in evaluating the standardized effect size 
\[
\theta=\frac{\mu_Y - \mu_X }{\sigma}.
\]
In particular, we consider the following hypotheses:
\[
\begin{split}
    H_0&:  |\theta| \leq \delta\\
    H_1&:  |\theta| > \delta,
\end{split}
\]
for some pre-specified $\delta>0$. To compute Bayes factor, we assume the following priors
\[
\begin{split}
& \mu_X\sim \N(0,100^2) \\
& \sigma^2\sim \text{Inverse-Gamma}(10^{-2},10^{-2})\\
& \theta\mid H_0 \triangleq \pi_0(\theta)\sim \text{Uniform}(-\delta,\delta)\\
&  \theta\mid H_1 \triangleq \pi_1(\theta)\sim N(0,\tau^2)I_{|\theta|>\delta},
\end{split}
\]
where $\tau$ in $\theta\in H_1$ is a parameter to be specified. Note that a larger $\tau$ corresponds to a prior distribution $\theta\in H_1$ that is more deviated from $\theta\mid H_0$. Let $\eta_0=\Pr(H_0)$ be the prior for hypothesis $H_0$. The combined prior $\pi$ for $\theta$ is then \eqref{eq:mixture} in Lemma \ref{lemma1}. 
Given the above priors for $\mu_A$, $\sigma^2$ and $\theta$ and the  likelihoods, we utilize Stan to generate draws $\theta^{(1)},\ldots,\theta^{(N)}$ from the posterior distribution of $\theta$. We then use Equation \eqref{eq:BFconvergence} for a consistent estimator of the Bayes factor.

We calculate Bayes factor of $H_1$ over $H_0$ for $\delta=0.1$, which is generally considered a small effect size, over a range of values for $\tau$ in the prior for $\theta\mid H_1$ that reflects different deviations of $H_1$ from $H_0$. The result is plotted in Figure \ref{fig:tau}. All these Bayes factors are between 1 and 1.4, indicating slightly more support of $H_1$ over $H_0$.

\begin{figure}[h]
    \centering
    \includegraphics[width=6cm]{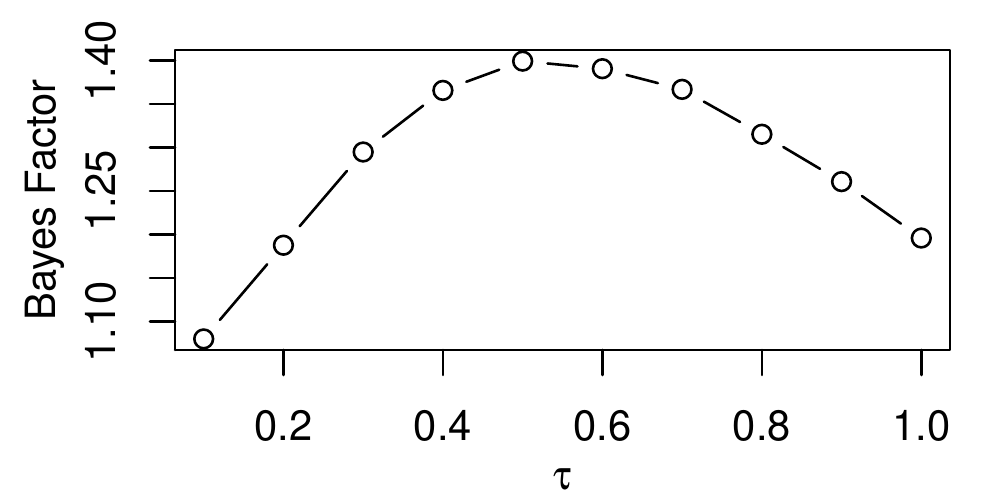}
    \caption{Bayes factors for the blood pressure data.}
    \label{fig:tau}
\end{figure}

It is noted that the Bayes factor for two-sample t-test setting with point null $H_0: \mu_X-\mu_Y = 0 $ has been extensively studied 
\citep{Rouder:2009aa,Wang:2016aa,Gonen:2005aa} including the R package \texttt{BayesFactor} by  \cite{Morey:2018aa}.

\subsection{Example 2: Meta analysis with a hierarchical Bayes model}

Here we will consider meta analysis of  a dataset originally published in Table 10 of \cite{Yusuf:1985}, which contains mortality data across 22 studies of patients who were treated with either a beta-blocker (treatment group) or placebo (control group) after experiencing a heart attack. The dataset is also reproduced and  analyzed in \cite[Sec. 5.6]{Gelman:2013ac}. The data from each study forms a $2\times 2$ table and the variables for the $i^\text{th}$ study are summarized in Table \ref{tab:2by2}.

%the data from the $i^\text{th}$ study has the format in Table 2. Let $n_{i,j}$, respectively $y_{i,j}$, represent the number of subjects, respectively number of deaths, in the $i^{\text{th}}$ study ($i=1,\ldots,22)$ and $j^{\text{th}}$ group ($j=0$ corresponds to the control group and $j=1$ corresponds to the treatment group). 

\begin{table}[H]
    \centering
    \caption{Data format for the $i^\text{th}$ study}    
    \label{tab:2by2}
    \begin{tabular}{r c c l}
         & death & non-death & total \\\cline{2-3}
         \multicolumn{1}{r|}{beta-blocker} & \multicolumn{1}{c|}{$y_{i,1}$} & \multicolumn{1}{c|}{$n_{i,1}-y_{i,1}$} & $n_{i,1}$\\ \cline{2-3}
         \multicolumn{1}{r|}{placebo} & \multicolumn{1}{c|}{$y_{i,0}$} & \multicolumn{1}{c|}{$n_{i,0}-y_{i,0}$} & $n_{i,0}$\\\cline{2-3}
         
    \end{tabular}
\end{table}

For the $i^{\text{th}}$ study, let $p_{i,j}$, $j=0, 1$, be the underlying probability of death for the beta-blocker and placebo group respectively, and let $\theta_{i}$ be the corresponding log odds ratio; i.e. 
\[
\theta_i=\log \left(\frac{p_{i,1}}{1-p_{i,1}}\middle/
\frac{p_{i,0}}{1-p_{i,0}}\right)=\logit(p_{i,1})-\logit(p_{i,0}).
\]
We assume $\theta_i\sim\N(\theta,\sigma^2)$ and the focus of our inference is $\theta$, the mean of individual $\theta_i$. In particular, we compare  $H_0: |\theta| \leq \delta$ vs. $H_1 : |\theta|> \delta$. Our hierarchical Bayesian model is specified as follows:
\[
\begin{split}
y_{i,j}&\sim \text{Binomial}(n_{i,j},p_{i,j})\\
\logit(p_{i,0})&\sim \N(0,10^2)\\
\theta_i&\sim\N(\theta,\sigma^2)\\
\theta\mid H_0\triangleq \pi_0 &\sim \text{Uniform}[-\delta,\delta]\\
\theta\mid H_1\triangleq \pi_1 &\sim  N(0,\tau^2)I_{|\theta|>\delta}\\
\sigma^2 &\sim \text{Inverse-Gamma}(10^{-2},10^{-2}).
\end{split}
\]

% The data from each of the 22 centers can be summarized in a 2 by 2 table as in Table 2.  The data is taken from \cite[Table 10, p. 349]{Yusuf:1985} as re-printed in \cite[Sec. 5.6, p. 124]{Gelman:2013ab}. 

% Table 1: The $i^{th}$ $2 \times 2$ table 

% \noindent
% \begin{center}
% \begin{tabular}{|m{3cm}|m{3cm}|m{3cm}||m{1cm}||} 
% \hline
% Treatment Group & No. of Deaths & No. of Non-Deaths & Total \\ 
% \hline
% \hline
% Control & $y_{i,0}$  & $n_{i,0}-y_{i,0}$ & $n_{i,0}$  \\ 
% \hline
% Beta-blocker & $y_{i,1}$ & $n_{i,1}-y_{i,1}$  & $n_{i,1}$ \\ 
% \hline
% \end{tabular}
% \end{center}

% To set the notation and model, let $N=22$ be the number of centers; For the $i^{th}$ center, let be the number of subjects in the control arm and beta-blocker arm be $n_{i,0}$ and $n_{i,1}$. $y_{i,0}$ and $y_{i,1}$ are the number of deaths in the two arms.

%We shall model $\psi \sim N(\psi,\sigma^2)$ where $\psi$ is the mean of the individual $\psi_i$. The focus of inference is on $\psi$. 

For $\tau=1.5 \delta$ and  $\eta_0=1/2$, we draw MCMC samples from the posterior of $\theta$ using Stan and estimate $\Pr(\theta \in \pi_1\mid y)$ and $BF_{10}$ (Bayes Factor of $H_1$ over $H_0$) as in Lemma \ref{lemma1}. The results are given in Table 3 for three different values of effect size $\delta$. Note that here the Bayes factor shows strong support for $\pi_1$ over $\pi_0$ for $\delta=0.1$ whereas it shows a weak support of $\pi_1$ for $\delta=0.2$. When $\delta$ is increased to 0.3, however, Bayes factor shows weak support of $\pi_0$ over $\pi_1$. 

\begin{table}[H]
\centering
\caption{Bayes factors for different values of $\delta$}
\label{table:meta}
%\begin{tabular}{|m{1cm}|m{3cm}|m{1cm}|} 
\begin{tabular}{|c|c|c|} 
\hline
 $\delta$ & $\Pr(\theta \in \pi_1\mid y)$ & $BF_{10}$ \\ 
\hline
$0.1$  & $0.97$ & $35.6$  \\ 
$0.2$ & $0.81$  & $4.37$ \\ 
$0.3$ & $0.29$  & $0.411$ \\ 
\hline
\end{tabular}
\end{table}

Stan and R code for Examples 1 and 2  are available at https://sites.google.com/site/jiangangliao.

In Summary, this papers establishes a formal connection between ROPE and Bayes factor for comparing interval hypotheses. This connection leads to better understanding of ROPE and provides a simple method to compute Bayes factor using MCMC draws from the posterior distribution.

\bibliographystyle{ECA_jasa}
\bibliography{BIC}

\end{document}